\newtheorem{theorem}{Theorem}[section]
\newtheorem{lemma}[theorem]{Lemma}
\newtheorem{observation}{Observation}
\newenvironment{proof}[1][Proof]{\begin{trivlist}
\item[\hskip \labelsep {\bfseries #1}]}{\end{trivlist}}
\newenvironment{remark}[1][Remark]{\begin{trivlist}
\item[\hskip \labelsep {\bfseries #1}]}{\end{trivlist}}
\newcommand{\qed}{\nobreak \ifvmode \relax \else
     \ifdim\lastskip<1.5em \hskip-\lastskip
     \hskip1.5em plus0em minus0.5em \fi \nobreak
     \vrule height0.75em width0.5em depth0.25em\fi}
\begin{document}

\title{Improved Fault Analysis on SIMECK Ciphers}

\author{Duc-Phong~Le, %~\IEEEmembership{Member,~IEEE,}
        Rongxing~Lu,~\IEEEmembership{Senior~Member,~IEEE,}
        and~Ali~A.~Ghorbani,~\IEEEmembership{Senior~Member,~IEEE}% <-this % stops a space
\IEEEcompsocitemizethanks{\IEEEcompsocthanksitem D.-P.~Le, R. Lu, and A. Ghorbani are with the Canadian Institute for Cybersecurity,  Faculty of Computer Science, University of New Brunswick, Fredericton, Canada E3B 5A3. e-mail: (le.duc.phong@unb.ca, rlu1@unb.ca, ghorbani@unb.ca).\protect\\
}% <-this % stops an unwanted space
\thanks{Manuscript received April 19, 2005; revised August 26, 2015.}}

\markboth{Journal of \LaTeX\ Class Files,~Vol.~14, No.~8, August~2015}%
{Shell \MakeLowercase{\textit{et al.}}: Bare Demo of IEEEtran.cls for Computer Society Journals}

\IEEEtitleabstractindextext{%
\begin{abstract}
The advances of Internet of Things (IoT) have had a fundamental impact and influence in sharping our rich living experiences. However, since IoT devices are usually resource-constrained, lightweight block ciphers have played a major role in serving as a building block for secure IoT protocols.  In CHES 2015, SIMECK, a family of block ciphers, was designed for resource-constrained IoT devices.
Since its publication, there have been many analyses on its security.
In this paper, under the one bit-flip model, we  propose a new efficient fault analysis attack on SIMECK ciphers. Compared to those previously reported attacks, our attack can recover the full master key by injecting faults into only a {\em single round} of all SIMECK family members. This property is crucial, as it is infeasible for an attacker to inject faults into different rounds of a SIMECK implementation on IoT devices in the real world. Specifically, our attack is characterized by exercising a deep analysis of differential trail between the correct and faulty immediate ciphertexts. Extensive simulation evaluations are conducted, and the results demonstrate the effectiveness and correctness of our proposed attack.
\end{abstract}

\begin{IEEEkeywords}
SIMECK Ciphers, Lightweight Cryptography, Cryptanalysis, Differential Fault Analysis, Bit-flip Model.
\end{IEEEkeywords}}

\maketitle

\IEEEdisplaynontitleabstractindextext
\IEEEpeerreviewmaketitle

\IEEEraisesectionheading{\section{Introduction}\label{sec:introduction}}

\IEEEPARstart{O}{ne} of the key technologies transforming our living experiences into much smarter ones is the Internet of Things (IoT).  However, since   IoT devices are usually resource-constrained, it is pressing to design lightweight block ciphers as building blocks to secure IoT protocols, \emph{i.e.}, providing basic security requirements including confidentiality, data integrity, and source authentication for IoT devices.

By now, there have been numerous lightweight block ciphers introduced. A statistic performed by researchers of the University of Luxembourg~\footnote{https://www.cryptolux.org/index.php/Lightweight_Block_Ciphers} lists 36 lightweight block ciphers proposed by 2016. 
In order to evaluate and standardize lightweight cryptographic algorithms, NIST is supporting a Lightweight Cryptography project. By now, 32 candidates have been selected and are evaluating in Round 2. Readers can find more information about the project in this link \url{https://csrc.nist.gov/projects/lightweight-cryptography}.

In CHES 2015,  Yang {\it et al.}~\cite{simeck} introduced SIMECK, which is a family of lightweight block ciphers based upon Feistel structure and combines the good design principles of the SIMON and SPECK block ciphers~\cite{BSS+13}. Precisely, SIMECK  consists of three members with block sizes of $32, 48$, and $64$, and the corresponding key sizes are $64, 96$, and $128$, respectively. As demonstrated in~\cite{simeck}, SIMECK allows a smaller and more efficient hardware implementation in comparison to  SIMON. Due to its nice property in efficiency, SIMECK has been significantly analyzed since its publication.

As is known to all, fault analysis is a very efficient implementation attack against cryptographic protocols, which essentially tries to influence the behavior of a cryptographic device and determine sensitive information by examining the effects. Today, there have been several mechanisms to inject faults into microprocessors. Examples include changes in the power supply, the clock frequency~\cite{KK99}, or intensive lighting of the circuit~\cite{SA02}. In most cases, injecting faults will force a change in the data located in one of the registers.
The first fault attack against RSA-CRT implementation was reported in {a Bellcore press in 1996} and subsequently analyzed by Boneh, DeMillo and Lipton in~\cite{BDL97}. Concretely, Boneh \emph{et al.}~\cite{BDL97} showed that many implementations of RSA signatures and other public-key algorithms are vulnerable to a certain transient fault occurring during the processing phase, and the RSA-CRT implementation is at extreme risk to be compromised by using one erroneous result. After that, Biham and Shamir ~\cite{BS97} introduced the Differential Fault Analysis (DFA) against symmetric cryptosystems such as the DES~\cite{BS97}. They assume that an attacker can disturb DES computations by using the same - but {\it unknown} - plaintext at the last (three) DES round(s). The wrong ciphers provide a system of equations for the unknown last round key bits that finally reveals the correct key value.
Since then, there have been many DFA attacks carried out on other block ciphers, including attacks against AES~\cite{Gir04,kim2011improved}, Triple DES~\cite{HJQ04}, IDEA~\cite{CGV08}, SIMON and SPECK~\cite{tupsamudre2014differential, LYK19} or KLEIN lightweight block ciphers~\cite{GruberS19}. These attacks are performed on the key schedule~\cite{kim2011improved}, S-box~\cite{GruberS19}, or intermediate inputs~\cite{tupsamudre2014differential}. They are also carried out under various fault models (see more details in Section~2.3.2).

In 2016, Nalla \emph{et al.} ~\cite{NSS16} presented two fault analysis attacks against SIMECK ciphers. While the former is under the one bit-flip fault model, the latter is under the random byte fault model. Both of the two attacks aim at recovering the last round key by injecting faults into the second last round.
 In this paper, we would like to present an improved fault analysis attack against SIMECK block ciphers under the one bit-flip model, which is more practical than Nalla \emph{et al.}'s attacks. Specifically, the main contribution of this paper is three-fold.
 \begin{itemize}
   \item First,  we show  that the whole master key of SIMECK block ciphers could be recovered by injecting faults into a single round of the ciphers, which makes our attack more {\em practical} than  Nalla \emph{et al.}'s attacks~\cite{NSS16}, as their attacks require faults from $4$ different rounds.
   \item Second, by deducing more key bits with one fault, our attack also requires fewer faults than those previously reported attacks.
   \item  Third, we conduct extensive simulation evaluations, and the results demonstrate the effectiveness and correctness of our attack.
 \end{itemize}

The remainder of this paper is organized as follows. Section~\ref{background} briefly recalls SIMECK ciphers, security analyses, and differential fault analysis on this family of ciphers. 
Section~\ref{sec:observations} discusses some facts and observations that will be used in our attack.
Section~\ref{sec:bitflip} describes our differential fault analysis under the one bit-flip model. Then, we present our simulation evaluations in Section~\ref{sec:simulation}. Finally, we draw our conclusions in Section~\ref{sec:conclusion}.

\section{Preliminaries}\label{background}
In this section, we briefly recall the specification of the SIMECK family of lightweight block ciphers and differential fault analysis on this family of ciphers.
%\subsection{Notations.}
For the sake of consistency, we make use of the following notations listed in Table~\ref{tab:notations} in the rest of the paper. In principles, we use capital letters for vectors or strings while small letters are used to represent individual bits.

\begin{table}[ht]
\centering
\caption{Definition of Notations}
\label{tab:notations}
\footnotesize
\begin{tabular}{p{0.07\textwidth}p{ 0.36\textwidth}}

\toprule
~Notation& ~Definition\\
\midrule \midrule

\rowcolor[gray]{.9}  $T$ &    the total number of rounds of the cipher.\\
$(X^i, Y^i)$ &   input of round $i$, $i=0,1,\ldots,T-1$. \\
\rowcolor[gray]{.9} $(X^T, Y^T)$ &   ciphertext. \\
$(\hat{X}^{ i}, \hat{Y}^{ i})$ &   faulty input of round $i$, $i=0,1,\ldots,T-1$. \\
\rowcolor[gray]{.9} $(\hat{X}^{ T}, \hat{Y}^{ T})$ &  faulty ciphertext. \\
$X^i + Y^i$ &  `XOR' operation of $X^i$ and $Y^i$.\\
\rowcolor[gray]{.9} $X^i Y^i$ &  `AND' operation of $X^i$ and $Y^i$.\\
$(x^i_j, y^i_j)$ &  variables denoting bit $j$ of the input of the round $i$, $i=0,1,\ldots,T$, $ \quad j=0,1,\ldots,n-1$.\\
\rowcolor[gray]{.9} $(\hat{x}^{i}_j, \hat{y}^{i}_j)$ &   variables denoting bit $j$ of the faulty input of the round $i$, $i=0,1,\ldots,T$, $ \quad j=0,1,\ldots,n-1$.\\
$x^i_{j} + y^i_{j}$ &  bitwise 'XOR' operation of $x^i_j$ and $y^i_j$. \\
\rowcolor[gray]{.9} $x^i_{j} y^i_{j}$ & bitwise 'AND' operation of $x^i_j$ and $y^i_j$. \\
$K^i$ &  the round key in round $i$, $i=0,1,\ldots,T-1$. \\
\rowcolor[gray]{.9} $k^i_{j}$ &   $j^{th}, 0 \le j \le n-1$ bit round key of round $i$.\\
$\Delta^i$ &  the difference between correct and faulty left inputs of round $i$, where $ 0 \le i < T$. \\
\rowcolor[gray]{.9} $\delta^i_{j}$ &   the difference at bit $j$ of $\Delta^i$, where $ 0 \le i < T$, and $0 \le j < n$.\\
$S^a(X)$ & Circular left rotation of a $n$ bit word $X$ by $a$ bits.\\

\bottomrule
\end{tabular}

\end{table}

%\begin{table}[h]
%\label{tab:notations}
%\begin{tabular}{p{1.cm} c p{6.5cm}}
%$T$ & : & denotes the total number of rounds of the cipher.\\
%$(X^i, Y^i)$ &  : & denotes input of round $i$, $i=0,1,\ldots,T-1$. \\
%$(X^T, Y^T)$ &  : & denotes ciphertext. \\
%$(\hat{X}^{ i}, \hat{Y}^{ i})$ &  : & denotes faulty input of round $i$, $i=0,1,\ldots,T-1$. \\
%$(\hat{X}^{ T}, \hat{Y}^{ T})$ &  : & denotes faulty ciphertext. \\
%$X^i + Y^i$ & : & denotes 'XOR' operation of $X^i$ and $Y^i$.\\
%$X^i Y^i$ & : & denotes 'AND' operation of $X^i$ and $Y^i$.\\
%$(x^i_j, y^i_j)$ &  : & variables denoting bit $j$ of the input of the round $i$, $i=0,1,\ldots,T$, $ \quad j=0,1,\ldots,n-1$.\\
%$(\hat{x}^{i}_j, \hat{y}^{i}_j)$ &  : & variables denoting bit $j$ of the faulty input of the round $i$, $i=0,1,\ldots,T$, $ \quad j=0,1,\ldots,n-1$.\\
%$x^i_{j} + y^i_{j}$ & : & denotes bitwise 'XOR' operation of $x^i_j$ and $y^i_j$. \\
%$x^i_{j} y^i_{j}$ & : & denotes bitwise 'AND' operation of $x^i_j$ and $y^i_j$. \\
%$K^i$ & : & denotes the round key in round $i$, $i=0,1,\ldots,T-1$. \\
%$k^i_{j}$ &  : & denotes $j^{th}, 0 \le j \le n-1$ bit round key of round $i$.\\
%$\Delta^i$ & : & denotes the difference between correct and faulty left inputs of round $i$, where $ 0 \le i < T$. \\
%$\delta^i_{j}$ &  : & denotes the difference at bit $j$ of $\Delta^i$, where $ 0 \le i < T$, and $0 \le j < n$.\\
%$S^a(X)$ & : & Circular left rotation of a $n$ bit word $X$ by $a$ bits.\\
%\end{tabular}
%\end{table}

\subsection{SIMECK Specification}

The SIMECK family of lightweight block ciphers was introduced in CHES 2015~\cite{simeck} and was optimized for hardware implementations.
Similar to SIMON, SIMECK is based on a typical Feistel design and comprises three simple operations, namely, the bit-wise `AND', `rotation' and `XOR' operations. Let $n$ denote the word size. Then, SIMECK$2n/4n$ refers to perform encryptions or decryptions on $2n$-bit message blocks using a $4n$-bit key, where $n = 16, 24$ or $32$ is called the word size of SIMECK$2n/4n$.

\begin{figure}[ht]
\centering
\caption{One round of SIMECK block cipher}
\label{fig:Simeck}
\includegraphics[width=0.35\textwidth]{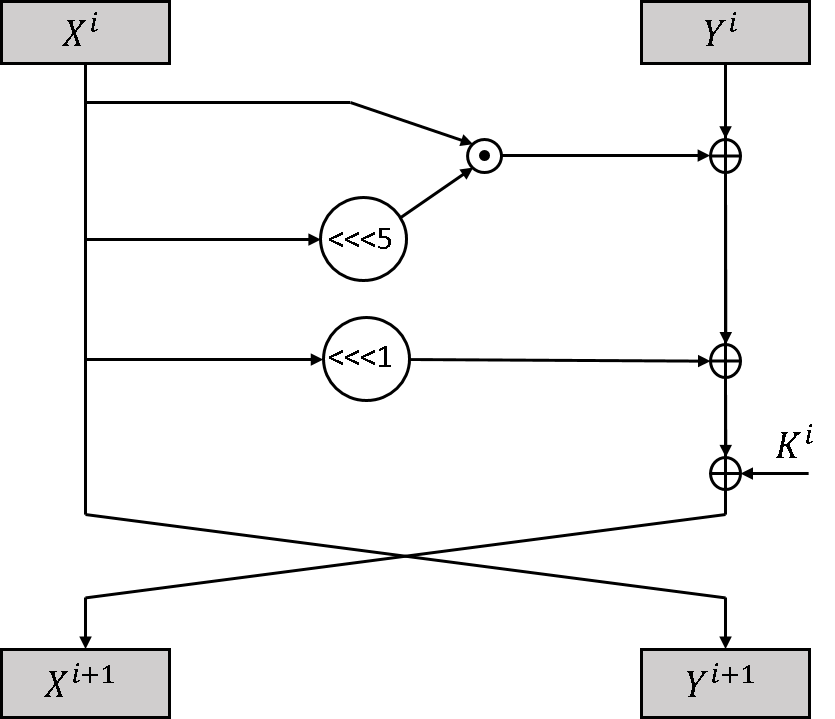}

\end{figure}

Fig.~\ref{fig:Simeck} shows a round function of SIMECK. Basically, the design of this family is based on the balanced Feistel network. There are in total $32, 36$ and $44$ encryption rounds for SIMECK32/64, SIMECK48/96, an SIMECK64/128, respectively. In the following, let us consider encrypting a plaintext $P$. At each round $i$, the input message is divided into two words $X^i$ and $Y^i$, where $P = X^0 || Y^0$ is the $2n$-bit plaintext. The round function of SIMECK is defined as follows:

\begin{equation}\label{eq:roundfunction}
(X^{i + 1}, Y^{i + 1}) = (Y^i + F(X^i) + K^i, X^i)
\end{equation}

\noindent where $F(X) = X \cdot S^5(X) + S^1(X)$, $K^i$ is the round key at the round $i$, and $S^a$ denotes a circular left rotation by $a$ bits. Let $(x^i_{n - 1}, \ldots, x^i_{0})$, and $(y^i_{n - 1}, \ldots, y^i_{0})$ denote the input of round $i$ for $0 \le i \le T-1$. The following relations hold for any $j = 0,\ldots, n - 1$:
\begin{align}
\label{eq:simeck}
&x^{i+1}_{j} = f(x^i_j) + y^i_{j} + k^{i}_{j}, \nonumber \\
&y^{i + 1}_{j} = x^i_{j},
\end{align}
where $f(x^i_j) = (x^i_{j} \; \& \; x^i_{(j - 5) \bmod n}) + x^i_{(j - 1) \bmod n}$. As all indices are computed modulo $n$, in what follows, $x^i_{j}$ will denote $x^i_{j \bmod n}$, $x^i_{j} x^i_{j'}$ will denote $x^i_{j \bmod n} \; \& \; x^i_{j' \bmod n}$, and $x^i_{j}  + x^i_{j'}$ will denote $x^i_{j \bmod n} \; + \; x^i_{j' \bmod n}$. \\

\noindent {\bf Key Schedule:} In SIMECK ciphers, the round key is generated from the master $K$ as follows. Firstly, the master is segmented into four words $K = (K^3, K^2, K^1, K^0)$, then for $i \ge 0$:

$$K^{i + 4} = K^i + F(K^{i + 1}) + C + (z_j)_i,$$

where $C$ is a constant value defined by $C = 2^n - 4$, $(z_j)_i$ denotes the $i$-th bit of the sequence $z_j$. SIMECK32/64 and SIMECK48/96 use the same sequence $z_0$, which can be generated by the primitive polynomial $X^5 + X^2 + 1$ with the initial state $(1, 1, 1, 1, 1)$, whereas SIMECK64/128 uses the sequence $z_1$, which can be generated by the primitive polynomial $X^6 + X + 1$ with the initial state $(1, 1, 1, 1, 1, 1)$.
Compared to SIMON ciphers, SIMECK ciphers are more efficient and compact hardware implementation due to its reuse of the round function with the round constant $C + (z_j)_i$ in the key scheduling algorithm.

\subsection{Security Analysis on SIMECK Ciphers}
Since SIMECK is based on SIMON and SPECK ciphers~\cite{BSS+13}, the security level of SIMECK ciphers is expected to be {\em comparable} to the those of SIMON ciphers.

Since their publication, there have been a number of academic works analyzing the security of SIMECK ciphers~\cite{bagheri2015linear, kolbl2016brief, qiao2016differential, qin2016linear, zhang2016integral, zhang2017security, sadeghi2018improved}. We briefly recall a few attacks in this section. Kolbl and Roy~\cite{kolbl2016brief} performed differential and linear cryptanalysis~\footnote{Differential cryptanalysis is a chosen-plaintext attack that studies how differences in input can affect the resultant difference at the output. Linear cryptanalysis is a known-plaintext attack in which the attacker studies probabilistic linear relations (called linear approximations) between parity bits of the plaintext, the ciphertext, and the secret key.}~\cite{biham1991differential, matsui1993linear}, the two most widely used attacks on block ciphers. They managed to break up to 19/32, 26/36, and 33/44 rounds of SIMECK32/64, SIMECK48/96, and SIMECK64/128 ciphers, respectively, with higher probability. These results cover more rounds compared to those against SIMON ciphers~(see \cite[Table 6]{kolbl2016brief}). Then, an improved differential attack using the dynamic key-guessing technique~\cite{qiao2016differential} was able to break up to 22, 28, and 35 rounds of SIMECK32/64, SIMECK48/96, and SIMECK64/128 ciphers. Even better, using the same technique with linear hull cryptanalysis, Qin et al.~\cite{qin2016linear} were able to break 1 or 2 more rounds compared to results in~\cite{qiao2016differential}.

\subsection{Differential Fault Analysis (DFA) on SIMECK Ciphers}

\subsubsection{Differential Fault Analysis}

The {\it Differential Fault Analysis} or {\it DFA} was first proposed by Biham and Shamir in~\cite{BS97}. Unlike the statistical attacks mentioned above such as differential and linear attacks, which requires a large amount of data to perform an attack, a DFA attack is able to recover the full master key by using just a few pair of plaintexts/ciphertexts.

In the Biham-Shamir's DFA attack against DES block cipher~\cite{DES}, an attacker will inject a fault in some rounds when executing a cryptographic implementation to obtain a faulty ciphertext. By analyzing the difference between the correct and faulty ciphertexts, the last round key could be revealed. Then, with the knowledge of the last round key, the attacker decrypts the correct ciphertext to obtain the input of the last round, which is the output of the second last round. After that, the attacker repeats this procedure to obtain more round keys until the master key can be deduced by the key schedule.
Subsequently, various DFA attacks on block ciphers have also been  carried out, including attacks against AES~\cite{Gir04}, Triple DES~\cite{HJQ04}, IDEA~\cite{CGV08}, SIMON and SPECK~\cite{tupsamudre2014differential, LYK19}, and SIMECK~\cite{NSS16}.

\subsubsection{Fault Models}

There exist various techniques to inject a fault into a computing device during its execution. Low-cost techniques include power glitch, clock tempering, or temperature variation~\cite{BDL97,KK99}. Kommerling and Kuhn reported that glitch attacks at the external power and clock supply lines are the most useful in practice~\cite{KK99}. 
High-cost techniques include light/laser injections or electromagnetic (EM) disturbances~\cite{SA02,SAS+02}.

%In these techniques, the attacker is able to insert transient faults leading to single bit errors. Since these techniques do not require precise timing, the faults tend to occur uniformly throughout the computation. 

These techniques allow specific control of a single register, a bus, or memory. 

The fault characteristics resulting from a fault injection are commonly captured in a {\it fault model}. A fault model will indicate the following characteristics: the location of the fault, the number of bits affected by the fault, and the fault types. 
In~\cite{riviere2015novel}, Riviere {\it et al.} classified 
fault models as follows:

\begin{itemize}
\item {\em Bit-wise models}: in which faults will manipulate a single bit. There are five types of bit-wise fault models: bit-set, bit-flip, bit-reset, stuck-at, and random-value.
\item {\em Byte-wise models}: in which faults will modify a byte at once. There are three types of byte-wise fault models: byte-set, byte-reset, or random-byte. 
\item {\em Wider models}: in which faults will manipulate an entire word that can be from 8 to 64 bits depending on the given architecture. 
\end{itemize}

Theoretical works often assume that the attacker has a precise control on both timing and location, {\em i.e.}, he knows the attacked bit as well as the attacked operation. 
In practice, it would be more challenging to inject a fault into a precise bit. 
In this paper, we consider the bit-flip model in which a single bit will be flipped to its complementary value, either from 0 to 1 or from 1 to 0. We also assume that the location of the flipped bit is {\em unknown} to the attacker.

\subsubsection{DFA attacks on SIMECK block cipher}
In the following, we briefly review differential fault analysis against SIMECK cipher in the {\it bit-flip} and {\em random byte} fault models. \\\\
In~\cite{NSS16}, Nalla \emph{et al.} presented the first differential fault analysis against SIMECK ciphers. Specifically, they demonstrated two DFA attacks. The former makes use of the {\it bit-flip fault model} and could recover the $n$-bit {\em last round key} using $n/2$ bit faults. The latter makes use of the {\it random byte fault model} that is more practical and could retrieve the last round key using $n/8$ faults. The process could be repeated four times to recover 4 round keys that are enough to recover the full master key.

Basically, their attacks mainly exploit the information leaked by the `AND' operation, which is the only non-linear function of SIMECK. Specifically, the attacker injects a fault in the intermediate left half ciphertext $X^{T-2}$, where $T$ is the number of rounds of SIMECK.  If one of the two input bits of the `AND' operation is $0$, then flipping the other input bit does not affect the output bit of $X^{T-1} = Y^T$. From the following equation, we can observe that one can deduce the last round key $K^{T-1}$ if the value $X^{T - 2}$ is known.

\begin{equation}\label{eq:lastroundkey}
K^{T-1} = X^{T-2} + F(Y^T) + X^T
\end{equation}

Suppose that the $j^{th}$ bit of $X^{T - 2}$ was flipped, one is able to deduce the value of $(j - 5)^{th}$ and $(j + 5)^{th}$ bits of $X^{T - 2}$, and thus recover the corresponding bits of $K^{T - 1}$ using Eq.~(\ref{eq:lastroundkey}). In other words, $2$ bits of $K^{T - 1}$ will be disclosed with every bit flipped in $X^{T-2}$. Therefore, by assuming that one could control the injected fault location, it requires $n/2$ faulty ciphertexts to retrieve the $n$-bit key. However, if the attacker has no control over the location of the flipped bit, the average number of faults required is approximately double, namely, around $n$ faults.\\\\

\section{Preliminary Observations and Facts}\label{sec:observations}
In this section,  some
 facts and observations that can be used to analyze faults on SIMECK block ciphers will be discussed. First, we consider the following lemma.

\begin{lemma}\label{lem:difference}
Let $X^t = \{x^t_0, x^t_1 \ldots, x^t_{n - 1}\}$ and $\hat{X}^t = \{\hat{x}^{t}_{0}, \hat{x}^{t}_{1}, \ldots, \hat{x}^{t}_{n - 1}\}$ be the correct and faulty left inputs respectively of the intermediate $t$-th round, $0 \le t < T$. Let $\delta^t_j = x^t_j + \hat{x}^{t}_{j}$, for $0 \le j < n $, be the differential representation of two correct and faulty bits $x^t_j$ and $\hat{x}^{t}_{j}$. We have:

\begin{dmath}\label{eq:difference}
\delta^{t + 1}_j = \delta^t_{j} x^t_{j - 5} + \delta^t_{j - 5}  x^t_{j} + \delta^t_{j} \delta^t_{j - 5} + \delta^t_{j - 1} + \delta^{t - 1}_j
\end{dmath}
\end{lemma}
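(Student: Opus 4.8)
The plan is to unroll the bitwise round relations for both the correct and the faulty trails, subtract them over $\mathbb{F}_2$, and then simplify the single nonlinear (\textsc{and}) contribution. First I would rewrite one round purely in terms of left-half bits: from Eq.~(\ref{eq:simeck}) we have $x^{i+1}_j = x^i_j x^i_{j-5} + x^i_{j-1} + y^i_j + k^i_j$ together with $y^{i+1}_j = x^i_j$, so the Feistel swap gives $y^i_j = x^{i-1}_j$ and hence
\begin{equation*}
x^{i+1}_j = x^i_j x^i_{j-5} + x^i_{j-1} + x^{i-1}_j + k^i_j ,
\end{equation*}
all subscripts read modulo $n$ in accordance with the stated convention. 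Since a data fault leaves the round keys untouched, the faulty trail obeys the same relation with every $x$ replaced by $\hat{x}$ and the \emph{same} $k^i_j$.

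Next I would add the correct and faulty relations over $\mathbb{F}_2$ (i.e.\ XOR them) at round $i=t$. The key bits $k^t_j$ cancel, and recalling $\delta^t_{j'} = x^t_{j'} + \hat{x}^{t}_{j'}$ this leaves
\begin{equation*}
\delta^{t+1}_j = \bigl(x^t_j x^t_{j-5} + \hat{x}^{t}_{j}\,\hat{x}^{t}_{j-5}\bigr) + \delta^t_{j-1} + \delta^{t-1}_j .
\end{equation*}
It then remains to expand the quadratic term. Substituting $\hat{x}^{t}_{j} = x^t_j + \delta^t_j$ and $\hat{x}^{t}_{j-5} = x^t_{j-5} + \delta^t_{j-5}$ and multiplying out over $\mathbb{F}_2$, the pure term $x^t_j x^t_{j-5}$ cancels against its counterpart, leaving $\delta^t_j x^t_{j-5} + \delta^t_{j-5} x^t_j + \delta^t_j \delta^t_{j-5}$. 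Plugging this back in gives exactly Eq.~(\ref{eq:difference}).

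This is essentially a direct computation, so there is no deep obstacle; the points that need care are the Feistel substitution $y^i_j = x^{i-1}_j$ (which is what produces the $\delta^{t-1}_j$ term and explains why the recurrence reaches back two rounds), the handling of the boundary round $t=0$ by interpreting $\delta^{-1}_j$ as the input difference carried on the right half $Y^0$, and checking that the degree-two part of the \textsc{and} term is the only nonlinear contribution to the difference propagation. I expect the modular index bookkeeping to be the most error-prone part of writing it up, rather than any conceptual difficulty.
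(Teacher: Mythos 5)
Your proposal is correct and follows essentially the same route as the paper: XOR the two round relations so the key bits cancel, use the Feistel fact $y^t_j = x^{t-1}_j$ to produce the $\delta^{t-1}_j$ term, and expand the single \textsc{and} contribution over $\F_2$ to obtain $\delta^t_j x^t_{j-5} + \delta^t_{j-5} x^t_j + \delta^t_j \delta^t_{j-5}$. The only cosmetic difference is that you substitute $\hat{x}^t = x^t + \delta^t$ directly, whereas the paper rearranges the cross terms, yielding the identical result.
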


\begin{proof}
From Eq.~(\ref{eq:simeck}), we have:
\begin{align*}
x^{t+1}_j &= x^t_{j} x^t_{j - 5} + x^t_{j - 1} + y^t_{j} + k^t_j, \quad \mbox{ and }\\
\hat{x}^{t+1}_j &= \hat{x}^{t}_{j} \hat{x}^{t}_{j - 5} + \hat{x}^{t}_{j - 1} + \hat{y}^{t}_{j} + k^t_{j}
\end{align*}

Note that $y^t_j = x^{t - 1}_j$. Then, summing up the two above equations gives:
$$\delta^{t + 1}_j = x^t_{j} x^t_{j - 5} + \hat{x}^{t}_{j} \hat{x}^{t}_{j - 5} + \delta^t_{j - 1} + \delta^{t - 1}_j$$

We have:
\begin{small}

\begin{align*}
&x^t_{j} x^t_{j - 5} + \hat{x}^{t}_j \hat{x}^{t}_{j - 5} \\&= (x^t_j + \hat{x}^{t}_j) (x^t_{j - 5} + \hat{x}^{t}_{j - 5}) + \hat{x}^{t}_j x^t_{j - 5} + x^{t}_j \hat{x}^{t}_{j - 5}\\&
 = \delta^t_j \delta^t_{j - 5} + (\hat{x}^{t}_j x^t_{j - 5}  + x^t_j x^t_{j -8}) + (x^{t}_j \hat{x}^{t}_{j - 5} + x^t_j x^t_{j -8}) \\&
= \delta^t_j \delta^t_{j - 5} + \delta^t_j x^t_{j - 5} + \delta^t_{j - 5}  x^t_j
\end{align*}

\end{small}

\noindent As a result, $\delta^{t + 1}_j = \delta^t_j x^t_{j - 5} + \delta^t_{j - 5}  x^t_j + \delta^t_j \delta^t_{j - 5} + \delta^t_{j - 1} + \delta^{t - 1}_j$. \qed

\end{proof}

Lemma \ref{lem:difference} tells us  that each bit $\delta^{t + 1}_j$ can be represented in terms of intermediate plaintext bits and input differences in the previous rounds.
More particular, the value of $\delta^{t + 1}_j$ depends on 2 bits of the intermediate input $X^t$, $3$ bits of the input difference $\Delta^t$, and one bit of $\Delta^{t - 1}$.
This allows us to construct a differential trail table to record and trace the $\delta^i_j$ values. Table~\ref{tab:trail_6rounds} shows the differential trail of SIMECK32/64 when we inject faults into the round $27$. The other trail tables are listed in  Fig.~\ref{sec:appendix}.

\begin{figure}[h]
\centering\footnotesize
\caption{Differential Trails of SIMECK Ciphers. Without loss of generality, we suppose that a fault will be injected into the bit $0$. The notation $*$ denotes for a complex non-linear expression of variables that is non-trivial to deduce their values.}\label{sec:appendix}
\vspace{0.2cm}
%\begin{tabular}{|p{8.27cm}|}
%\hline
%Let $\Delta^i = X^i + \hat{X}^{ i} = \{\delta^{i}_{j}\}_{0 \le i \le T,  0 \le j \le n-1}$ denote the difference of intermediate correct and faulty inputs at the round $i$. Without loss of generality, we suppose that a fault will be injected into the bit $0$. We list below differential trails of all different members of SIMECK lightweight block ciphers. Likewise, the notation $*$ denotes for a complex non-linear expression of variables that is non-trivial to deduce their values.\\
%\\
%\end{tabular}
%\begin{tabular}{|cp{7.3cm}|}
%\hline
%\multicolumn{2}{|c|}{SIMECK32/64}\\
%&\\
%$\Delta^{27}$ : & ($1$, $0$, $0$, $0$, $0$, $0$, $0$, $0$, $0$, $0$, $0$, $0$, $0$, $0$, $0$, $0$) \\
%$\Delta^{28}$ : & ($x^{27}_{11}$, $1$, $0$, $0$, $0$, $x^{27}_{5}$, $0$, $0$, $0$, $0$, $0$, $0$, $0$, $0$, $0$, $0$) \\
%$\Delta^{29}$ : & ($*$, $x^{28}_{12} + x^{27}_{11}$, $1$, $0$, $0$, $*$, $x^{28}_{6} + x^{27}_{5}$, $0$, $0$, $0$, $*$, $0$, $0$, $0$, $0$, $0$) \\
%$\Delta^{30}$ : & ($*$, $*$, $x^{29}_{13} + x^{28}_{12} + x^{27}_{11}$, $1$, $0$, $*$, $*$, $x^{29}_{7} + x^{28}_{6} + x^{27}_{5}$, $0$, $0$, $*$, $*$, $0$, $0$, $0$, $*$) \\
%$\Delta^{31}$ : & ($*$, $*$, $*$, $x^{30}_{14} + x^{29}_{13} + x^{28}_{12} + x^{27}_{11}$, $*$, $*$, $*$, $*$, $x^{30}_{8} + x^{29}_{7} + x^{28}_{6} + x^{27}_{5}$, $0$, $*$, $*$, $*$, $0$, $0$, $*$) \\
%$\Delta^{32}$ : & ($*$, $*$, $*$, $*$, $*$, $*$, $*$, $*$, $*$, $*$, $*$, $*$, $*$, $*$, $0$, $*$)\\
%&\\
%\end{tabular}

\begin{tabular}{|cp{7.5cm}|}
\hline
\multicolumn{2}{|c|}{SIMECK48/96}\\
&\\
$\Delta^{31}$ : & ($1$, $0$, $0$, $0$, $0$, $0$, $0$, $0$, $0$, $0$, $0$, $0$, $0$, $0$, $0$, $0$, $0$, $0$, $0$, $0$, $0$, $0$, $0$, $0$)\\

$\Delta^{32}$ : & ($x^{31}_{19}$, $1$, $0$, $0$, $0$, $x^{31}_{5}$, $0$, $0$, $0$, $0$, $0$, $0$, $0$, $0$, $0$, $0$, $0$, $0$, $0$, $0$, $0$, $0$, $0$, $0$)\\

$\Delta^{33}$ : & ($*$, $x^{32}_{20} + x^{31}_{19}$, $1$, $0$, $0$, $*$, $x^{32}_{6} + x^{31}_{5}$, $0$, $0$, $0$, $*$, $0$, $0$, $0$, $0$, $0$, $0$, $0$, $0$, $0$, $0$, $0$, $0$, $0$)\\

$\Delta^{34}$ : & ($*$, $*$, $x^{33}_{21} + x^{32}_{20} + x^{31}_{19}$, $1$, $0$, $*$, $*$, $x^{33}_{7} + x^{32}_{6} + x^{31}_{5}$, $0$, $0$, $*$, $*$, $0$, $0$, $0$, $*$, $0$, $0$, $0$, $0$, $0$, $0$, $0$, $0$)\\

$\Delta^{35}$ : & ($*$, $*$, $*$, $x^{34}_{22} + x^{33}_{21} + x^{32}_{20} + x^{31}_{19}$, $1$, $*$, $*$, $*$, $x^{34}_{8} + x^{33}_{7} + x^{32}_{6} + x^{31}_{5}$, $0$, $*$, $*$, $*$, $0$, $0$, $*$, $*$, $0$, $0$, $0$, $*$, $0$, $0$, $0$)\\

$\Delta^{36}$ : & ($*$, $*$, $*$, $*$, $x^{35}_{23} + x^{34}_{22} + x^{33}_{21} + x^{32}_{20} + x^{31}_{19}$, $*$, $*$, $*$, $*$, $x^{35}_{9} + x^{34}_{8} + x^{33}_{7} + x^{32}_{6} + x^{31}_{5}$, $*$, $*$, $*$, $*$, $0$, $*$, $*$, $*$, $0$, $0$, $*$, $*$, $0$, $0$)
\\
&\\
\end{tabular}
\begin{tabular}{|cp{7.5cm}|}
\multicolumn{2}{|c|}{SIMECK64/128}\\
&\\
$\Delta^{39}$ : & ($1$, $0$, $0$, $0$, $0$, $0$, $0$, $0$, $0$, $0$, $0$, $0$, $0$, $0$, $0$, $0$, $0$, $0$, $0$, $0$, $0$, $0$, $0$, $0$, $0$, $0$, $0$, $0$, $0$, $0$, $0$, $0$)\\

$\Delta^{40}$ : & ($x^{39}_{27}$, $1$, $0$, $0$, $0$, $x^{39}_{5}$, $0$, $0$, $0$, $0$, $0$, $0$, $0$, $0$, $0$, $0$, $0$, $0$, $0$, $0$, $0$, $0$, $0$, $0$, $0$, $0$, $0$, $0$, $0$, $0$, $0$, $0$)\\

$\Delta^{41}$ : & ($*$, $x^{40}_{28} + x^{39}_{27}$, $1$, $0$, $0$, $*$, $x^{40}_{6} + x^{39}_{5}$, $0$, $0$, $0$, $*$, $0$, $0$, $0$, $0$, $0$, $0$, $0$, $0$, $0$, $0$, $0$, $0$, $0$, $0$, $0$, $0$, $0$, $0$, $0$, $0$, $0$)\\

$\Delta^{42}$ : & ($*$, $*$, $x^{41}_{29} + x^{40}_{28} + x^{39}_{27}$, $1$, $0$, $*$, $*$, $x^{41}_{7} + x^{40}_{6} + x^{39}_{5}$, $0$, $0$, $*$, $*$, $0$, $0$, $0$, $*$, $0$, $0$, $0$, $0$, $0$, $0$, $0$, $0$, $0$, $0$, $0$, $0$, $0$, $0$, $0$, $0$)\\

$\Delta^{43}$ : & ($*$, $*$, $*$, $x^{42}_{30} + x^{41}_{29} + x^{40}_{28} + x^{39}_{27}$, $1$, $*$, $*$, $*$, $x^{42}_{8} + x^{41}_{7} + x^{40}_{6} + x^{39}_{5}$, $0$, $*$, $*$, $*$, $0$, $0$, $*$, $*$, $0$, $0$, $0$, $*$, $0$, $0$, $0$, $0$, $0$, $0$, $0$, $0$, $0$, $0$, $0$)\\

$\Delta^{44}$ : & ($*$, $*$, $*$, $*$, $x^{43}_{31} + x^{42}_{30} + x^{41}_{29} + x^{40}_{28} + x^{39}_{27}$, $*$, $*$, $*$, $*$, $x^{43}_{9} + x^{42}_{8} + x^{41}_{7} + x^{40}_{6} + x^{39}_{5}$, $*$, $*$, $*$, $*$, $0$, $*$, $*$, $*$, $0$, $0$, $*$, $*$, $0$, $0$, $0$, $*$, $0$, $0$, $0$, $0$, $0$, $0$) \\
\hline
\end{tabular}
\end{figure}

\begin{small}
\begin{table*}[ht]
\centering
\caption{Differential trail for the left half of the last 7 rounds  of SIMECK32/64 when flipping the least significant bit $0$ at the round $27$. The notation $*$ denotes a non-linear expression that would not be used in our attack.}
\label{tab:trail_6rounds}
\begin{tabular}{p{0.10\textwidth}p{0.10\textwidth}p{0.10\textwidth}p{0.10\textwidth}p{0.13\textwidth}p{0.10\textwidth}p{0.10\textwidth}p{0.07\textwidth}}
%\begin{tabular}{|c|c|c|c|c|c|c|c|}\hline
\toprule
\textbf{Bit} & ${\bf 0}$&${\bf 1}$&${\bf 2}$&${\bf 3}$&${\bf 4}$&${\bf 5}$&${\bf 6}$\\
\midrule \midrule

\rowcolor[gray]{.9} $\Delta^{26}$ &$0$&$0$&$0$&$0$&$0$&$0$&$0$\\
$\Delta^{27}$ &$1$&$0$&$0$&$0$&$0$&$0$&$0$\\

\rowcolor[gray]{.9} $\Delta^{28}$ &$x^{27}_{11}$ & $1$ &$0$ & $0$ & $0$ & $x^{27}_5$ &$0$ \\
$\Delta^{29}$ & $*$ & $x^{28}_{12} + x^{27}_{11}$ & $1$ & $0$ & $0$  & $*$ & $x^{28}_{6} + x^{27}_{5}$ \\

\rowcolor[gray]{.9} $\Delta^{30}$ & $*$ & $*$ & $x^{29}_{13} + x^{28}_{12} + x^{27}_{11}$ & $1$ & $0$ & $*$ & $*$ \\

$\Delta^{31}$ & $*$ & $*$ & $*$ & $x^{30}_{14} + x^{29}_{13} + x^{28}_{12} + x^{27}_{11}$ & $*$ & $*$ & $*$ \\

\rowcolor[gray]{.9} $\Delta^{32}$ & ~ Known values &&&&&&\\
%$\Delta^{T}$ &$x^{T - 5}_{6} + x^{T - 4}_{8} + x^{T - 3}_{10} + x^{T - 2}_{12} + \delta^{T - 1}_{14}$&$*$&$*$&$*$&$*$&$*$&$*$&$x^{T - 5}_{8} + x^{T - 4}_{10} + x^{T - 3}_{12} + x^{T - 2}_{14} + \delta^{T - 1}_{0}$\\ \hline
\bottomrule
\end{tabular}

%\rowcolor[gray]{.9} $\Delta^{32}$ &  \multicolumn{7}{c}{Known values}\\

\vspace{.3cm}
\begin{tabular}{p{0.10\textwidth}p{0.10\textwidth}p{0.12\textwidth}p{0.06\textwidth}p{0.06\textwidth}p{0.06\textwidth}p{0.06\textwidth}p{0.06\textwidth}p{0.06\textwidth}p{0.07\textwidth}}
%\begin{tabular}{|c|c|c|c|c|c|c|c|}\hline
\toprule
{\bf Bit}  & ${\bf 7}$ &${\bf 8}$ & ${\bf 9}$&${\bf 10}$&${\bf 11}$&${\bf 12}$&${\bf 13}$&${\bf 14}$&${\bf 15}$\\
\midrule \midrule

\rowcolor[gray]{.9}$\Delta^{26}$ & $0$ & $0$ & $0$ & $0$ & $0$ & $0$ & $0$ & $0$ & $0$\\
$\Delta^{27}$ & $0$ & $0$ & $0$ & $0$ & $0$ & $0$ & $0$ & $0$ & $0$\\

\rowcolor[gray]{.9}$\Delta^{28}$ & $0$ & $0$ & $0$& $0$& $0$& $0$& $0$& $0$& $0$ \\
$\Delta^{29}$ & $0$ & $0$ & $0$& $*$ & $0$ & $0$ & $0$ & $0$& $0$ \\
\rowcolor[gray]{.9}$\Delta^{30}$ & $x^{29}_{7} + x^{28}_{6} + x^{27}_{5}$ & $0$ & $0$ & $*$& $*$ & $0$ & $0$ & $0$ & $*$ \\

$\Delta^{31}$ & $*$ & $x^{30}_{8} + x^{29}_{7} + x^{28}_{6} + x^{27}_{5}$ & $0$ & $*$ & $*$ & $*$ & $0$ & $0$ & $*$ \\
\rowcolor[gray]{.9}$\Delta^{32}$ &  ~ Known values &&&&&&&&\\
%$\Delta^{T}$ & $*$&$*$&$*$&$*$&$*$&$*$&$*$&$*$\\ \hline
\bottomrule
\end{tabular}
\end{table*}

\end{small}

\begin{small}
\begin{table*}[htbp]
\footnotesize
\centering
\caption{Differential trail for the left half of the last 6 rounds  of SIMECK48/96 and SIMECK64/128 when a fault injected at the round $T - 5$ causing one bit flipped. The notation $*$ denotes an algebraic expression of immediate input bit variables.}
\label{tab:zeros}

%\begin{tabular}{|c|lll|}\hline

\begin{tabular}{p{0.09\textwidth}p{0.08\textwidth}p{0.08\textwidth}p{0.08\textwidth}}
\toprule
{SIMECK48/96} & \multicolumn{3}{c}{Left Input Differences (Bit position) } \\
\rowcolor[gray]{.9}\textbf{Round} & 0--7 & 8--15 & 16--23\\
\midrule \midrule
 $\Delta^{31}$ & 10000000 & 00000000 & 00000000 \\
\rowcolor[gray]{.9}$\Delta^{32}$ & *1000*00 & 00000000 & 00000000 \\
$\Delta^{33}$ & **100**0 & 00*00000 & 00000000 \\
\rowcolor[gray]{.9} $\Delta^{34}$ & ***10*** & 00**000* & 00000000 \\
 $\Delta^{35}$ & ****1*** & *0***00* & *000*000 \\
\rowcolor[gray]{.9}$\Delta^{36}$ & ******** & ******0* & **00**00 \\
\bottomrule
\end{tabular}
~~~~~~
%\begin{tabular}{|c|llll|}\hline
\begin{tabular}{p{0.09\textwidth}p{0.08\textwidth}p{0.08\textwidth}p{0.08\textwidth}p{0.08\textwidth}}
\toprule
SIMECK64/128 & \multicolumn{4}{c}{Left Input Differences (Bit position) } \\
\rowcolor[gray]{.9}\textbf{Round} & 0--7 & 8--15 & 16--23 & 24--31 \\
\midrule \midrule
 $\Delta^{39}$ & 10000000 & 00000000 & 00000000 & 00000000 \\
\rowcolor[gray]{.9}$\Delta^{40}$ & *1000*00 & 00000000 & 00000000 & 00000000 \\
 $\Delta^{41}$ & **100**0 & 00*00000 & 00000000 & 00000000  \\
\rowcolor[gray]{.9}$\Delta^{42}$ & ***10*** & 00**000* & 00000000 & 00000000 \\

$\Delta^{43}$ & ****1*** & *0***00* & *000*000 & 00000000 \\
\rowcolor[gray]{.9}  $\Delta^{44}$ & ******** & ******0* & **00**00 & 0*000000 \\
\bottomrule
\end{tabular}

\end{table*}
\end{small}

\begin{remark}[Input differences at the round $T - 2$.]\label{rem:deducedeltat2}
Let $T$ be the number of round of a SIMECK cipher, given the left input differences at the rounds $T - 1$ and $T$ ({\em i.e.}, $\Delta^{T - 1}$ and $\Delta^T$, resp.), and the left input of the round $T - 1$, {\em i.e.}, $X^{T - 1}$, then the left input differences of the round $T - 2$ could be computed as follows:

\begin{align}\label{eq:deduceDeltat2}
\Delta^{T - 2} &= \Delta^{T - 1} S^5(X^{T - 1}) + S^5(\Delta^{T - 1}) X^{T - 1} \\ \nonumber & + \Delta^{T - 1} S^5(\Delta^{T - 1}) +  S^1(\Delta^{T - 1}) +  \Delta^{T}
\end{align}
\end{remark}
 Eq.~(\ref{eq:deduceDeltat2}) can be easily obtained because from Eq.~(\ref{eq:difference}) we have:
\begin{equation*}\label{eq:deducedeltat2}
\delta^{T - 2}_j = \delta^{T - 1}_j x^{T - 1}_{j - 5} + \delta^{T - 1}_{j - 5}  x^{T - 1}_j + \delta^{T - 1}_j \delta^{T - 1}_{j - 5} + \delta^{T - 1}_{j - 1} + \delta^{T}_j
\end{equation*}

Once an attacker obtains the correct ciphertext $C = (X^T, Y^T)$ and the faulty ciphertext $\hat{C} = (\hat{X}^{ T}, \hat{Y}^{ T})$, s/he first computes values: $X^{T - 1} = Y^T$, $\hat{X}^{T - 1} = \hat{Y}^T$, $\Delta^{T} = X^T + \hat{X}^T$, and $\Delta^{T - 1} = X^{T - 1} + \hat{X}^{T - 1}$. Then, s/he is able to deduce the value of $\Delta^{T - 2}$ via Eq.~(\ref{eq:deduceDeltat2}).

\begin{observation}[Fault Propagation]
For each presence of fault in the round $t$ at the bit position $j$, {\em i.e.}, $\delta^t_j = 1$, it will affect into $3$ left input differences in the next round $t + 1$, they are values of $\delta^{t + 1}_{j}$, $\delta^{t + 1}_{j + 1}$, and $\delta^{t + 1}_{j + 5}$.
\end{observation}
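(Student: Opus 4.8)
The plan is to read the claim straight off Lemma~\ref{lem:difference}. That lemma states that for every output index $m$ (all indices taken modulo $n$),
$$\delta^{t+1}_m = \delta^t_m\, x^t_{m-5} + \delta^t_{m-5}\, x^t_m + \delta^t_m \delta^t_{m-5} + \delta^t_{m-1} + \delta^{t-1}_m,$$
so the only round-$t$ difference bits that can enter the expression for $\delta^{t+1}_m$ are $\delta^t_m$, $\delta^t_{m-5}$ and $\delta^t_{m-1}$ (the last term $\delta^{t-1}_m$ belongs to an earlier round and is irrelevant to propagation from round $t$ to round $t+1$).

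First I would fix the faulted position $j$, i.e.\ assume $\delta^t_j = 1$, and determine for which output indices $m$ the symbol $\delta^t_j$ occurs on the right-hand side above. Matching $\delta^t_j$ against each of the three $\delta^t_{\cdot}$ slots produces three congruences modulo $n$: $m \equiv j$ (from the $\delta^t_m x^t_{m-5}$ and $\delta^t_m \delta^t_{m-5}$ terms), $m \equiv j+5$ (from the $\delta^t_{m-5} x^t_m$ and $\delta^t_m \delta^t_{m-5}$ terms), and $m \equiv j+1$ (from the $\delta^t_{m-1}$ term). Hence $\delta^t_j$ can appear only in $\delta^{t+1}_j$, $\delta^{t+1}_{j+1}$ and $\delta^{t+1}_{j+5}$, and it does appear symbolically in each of those three; for every other output index the right-hand side is free of $\delta^t_j$.

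Then I would confirm that these three target indices are pairwise distinct, which reduces to checking that $0$, $1$ and $5$ are pairwise incongruent modulo $n$; this is immediate since $n \in \{16,24,32\}$ for the SIMECK family. That leaves exactly three affected left input differences in round $t+1$, namely $\delta^{t+1}_j$, $\delta^{t+1}_{j+1}$ and $\delta^{t+1}_{j+5}$, as claimed.

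The one point worth flagging — more an interpretive remark than a real obstacle — is the word "affect": the coefficients $x^t_m$ and $x^t_{m-5}$ (and the companion difference $\delta^t_{m-5}$) may be zero for the particular message being encrypted, so a flipped bit need not literally toggle all three successor bits. The observation should thus be read as a statement about the \emph{support} of the algebraic dependency: a single fault at position $j$ can propagate only to these three positions (and generically to all of them). This is exactly what Lemma~\ref{lem:difference} provides, so no further computation is needed.
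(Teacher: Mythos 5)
Your proposal is correct and follows essentially the same route as the paper, which likewise justifies the observation by instantiating Lemma~\ref{lem:difference} (Eq.~(\ref{eq:difference})) to see that $\delta^t_j$ enters exactly the expressions for $\delta^{t+1}_j$, $\delta^{t+1}_{j+1}$, and $\delta^{t+1}_{j+5}$. Your version is marginally more complete than the paper's (you explicitly rule out all other output indices via the congruences $m\equiv j,\,j+1,\,j+5 \pmod n$, check distinctness, and flag that ``affect'' means algebraic dependence rather than a guaranteed toggle), but the underlying argument is the same.
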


From Lemma~\ref{lem:difference}, we have:
\begin{align*}
\delta^{t + 1}_j & = \delta^t_j x^t_{j - 5} + \delta^t_{j - 5}  x^t_j + \delta^t_j \delta^t_{j - 5} + \delta^t_{j - 1} + \delta^{t - 1}_j \\
\delta^{t + 1}_{j + 1} & = \delta^t_{j + 1} x^t_{j - 4} + \delta^t_{j - 4}  x^t_{j + 1} + \delta^t_{j + 1} \delta^t_{j - 4} + \delta^t_{j} + \delta^{t - 1}_{j + 1}  \\
\delta^{t + 1}_{j + 5} & = \delta^t_{j + 5} x^t_{j} + \delta^t_{j}  x^t_{j + 5} + \delta^t_{j + 5} \delta^t_{j} + \delta^t_{j + 4} + \delta^{t - 1}_{j + 5}.
\end{align*}

The values are dependent on the value of $\delta_{j}^{t}$. From Table~\ref{tab:trail_6rounds}, it can be seen that if one injects a fault at the bit position $0$ of the round $27$, that is, $\delta_0^{27} = 1$, then three bit positions $0, 1$ and $5$ at the round $28$ will be affected, they are $\delta^{28}_0$, $\delta^{28}_1$, and $\delta^{28}_5$. These faults continue propagating into the next round in the same way.

\begin{lemma}\label{lem:numberzero}
Suppose that a fault is injected into the left input of the round $t$ and one-bit flipped at the position $\ell$ is made, {\em i.e.}, $\delta^t_{\ell} = 1$ and $\delta^t_{j} = 0$ for $j \ne \ell$. Then, for $i \ge 1$,

\begin{enumerate}
\item $\delta^{t + i}_{\ell + i} = 1$ for $i \le \min\{4, \frac{n}{5}\}$
\item $\delta^{t + i}_{\ell + j} = 0$ for $1 \le i \le 3$, and $i < j \le 4$
\item $\delta^{t + i}_{\ell + j} = 0$ for $2 \le i \le 4$, and $i + 5 \le j \le i + 8$
\item $\delta^{t + i}_{j} = 0$ for $\ell + 5i < j < n + \ell$.
\end{enumerate}

\end{lemma}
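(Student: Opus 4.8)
The plan is to prove all four items simultaneously by induction on $i$, with the recurrence of Lemma~\ref{lem:difference} (Eq.~(\ref{eq:difference})) as the only ingredient; all indices are taken mod $n$. The induction is launched by two facts: because the fault is injected while round $t$ is being computed, the round-$(t-1)$ state is still correct, so $\delta^{t-1}_j=0$ for all $j$, which kills the last term of Eq.~(\ref{eq:difference}) at the first step; and $\delta^t_\ell=1$, $\delta^t_j=0$ otherwise. Substituting these yields the first row of the trail: $\delta^{t+1}_{\ell+1}=\delta^t_\ell=1$ (the first three terms vanish since $\delta^t_{\ell+1}=\delta^t_{\ell-4}=0$, which holds because $\ell$, $\ell+1$, $\ell-4$ are distinct for $n\ge16$), and $\delta^{t+1}_j=0$ for every $j\notin\{\ell,\ell+1,\ell+5\}$.

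The bookkeeping device I would carry through is the \emph{support} $R_i$, the set of positions not provably zero at round $t+i$. From Eq.~(\ref{eq:difference}), a possibly-nonzero bit at position $p$ of round $t+i$ can affect only positions $p$, $p+1$, $p+5$ of round $t+i+1$, and the $\delta^{t+i-1}$ term re-injects only positions already reached; hence $R_i\subseteq\{\ell+b+5c \bmod n : b,c\ge0,\; b+c\le i\}$. Items (2), (3) and (4) are then pure arithmetic on this set: $\ell+j$ with $i<j\le4$ is not of the form $\ell+b+5c$ with $b+c\le i\le3$ (if $c=0$ then $b\le i<j$, and if $c\ge1$ then $b+5c\ge5>j$); the largest reachable offset is $5i$ (take $b=0$, $c=i$), so when $5i<n$ every $j$ with $\ell+5i<j<n+\ell$ gives $\delta^{t+i}_j=0$, which is (4); and (3) is the same statement for the gap between the $+5$-cluster and the $+10$-cluster of offsets.

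For item (1) I would strengthen the inductive hypothesis at round $t+i$ to ``$\delta^{t+i}_{\ell+i}=1$'' together with the auxiliary support facts $\ell+i\notin R_{i-1}$, $\ell+i-5\notin R_{i-1}$ and $\ell+i\notin R_{i-2}$ that are needed to run the step. Expanding Eq.~(\ref{eq:difference}) for $\delta^{t+i+1}_{\ell+i+1}$ and feeding in the hypothesis, every term except $\delta^{t+i}_{\ell+i}$ drops out, so $\delta^{t+i+1}_{\ell+i+1}=1$; the auxiliary facts are just instances of the previous paragraph. This is where the two thresholds are forced, and it is the step I expect to be the real obstacle. The offset $\ell+i$ first re-enters a \emph{previous} support $R_{i-1}$ through the $+5$-branch, which requires $c=1$ hence $b=i-5\ge0$, i.e.\ $i\ge5$; so the step survives exactly while $i\le4$ — this is the ``$4$''. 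The offset $\ell+i-5$ wraps around to $\ell+(n+i-5)$, which stays out of $R_{i-1}$ only while $n+i-5>5(i-1)$; the statement's ``$n/5$'' is a clean coarsening of this wrap-around condition, and one must check that it indeed gives the sharp answer for the three word sizes $n\in\{16,24,32\}$ — in particular that the wrap-around spoils $i=4$ when $n=16$ but not when $n\ge24$, which is read off $R_3=\{\ell,\ell+1,\ell+2,\ell+3,\ell+5,\ell+6,\ell+7,\ell+10,\ell+11,\ell+15\}$.
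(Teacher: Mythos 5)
Your overall strategy --- induction on the round via the recurrence of Lemma~\ref{lem:difference} (Eq.~(\ref{eq:difference})), launched from $\Delta^{t-1}=0$ and $\delta^t_\ell=1$, with the reachable-offset cone $R_i\subseteq\{\ell+b+5c \bmod n:\ b,c\ge 0,\ b+c\le i\}$ as bookkeeping --- is the same fault-propagation argument the paper gives, except that the paper only traces rounds $t+1$ and $t+2$ explicitly and then says the process continues, so your version is the more careful one. Your handling of items (1), (2) and (4) is sound: the three zero-conditions needed for the diagonal step ($\delta^{t+i-1}_{\ell+i}=\delta^{t+i-1}_{\ell+i-5}=\delta^{t+i-2}_{\ell+i}=0$) are the right ones, the observation that $\ell+i$ re-enters the cone only when $i\ge 5$ explains the cap at $4$, the wrap-around condition $n+i-5>5(i-1)$ (i.e.\ $n>4i$) correctly explains why the diagonal survives $i=4$ for $n\in\{24,32\}$ but not for $n=16$, and your $R_3$ matches the paper's differential trails (Table~\ref{tab:trail_6rounds} and Fig.~2).

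The gap is item (3). You assert it is ``the same statement for the gap between the $+5$-cluster and the $+10$-cluster of offsets,'' but the printed range $i+5\le j\le i+8$ is not that gap: the gap between $\{5,\dots,5+i-1\}$ and $\{10,\dots,10+i-2\}$ is only $\{i+5,\dots,9\}$, whereas the printed range reaches $i+8\in\{10,11,12\}$, and those offsets lie \emph{inside} the cone (offset $10=5\cdot 2$ is reachable for $i\ge 2$, offset $11$ for $i\ge 3$, offset $12$ for $i=4$). So the arithmetic you invoke cannot yield $\delta^{t+i}_{\ell+j}=0$ there; worse, those bits are genuinely data-dependent, e.g.\ $\delta^{t+2}_{\ell+10}=x^{t}_{\ell+5}\,x^{t+1}_{\ell+10}$, which is exactly why the paper's own trail tables mark positions $\ell+10$, $\ell+11$, $\ell+12$ with $*$ at rounds $t+2$, $t+3$, $t+4$, and why the paper's proof itself lists $\delta^{t+2}_{\ell+10}$ among the bits affected at round $t+2$. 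Your argument therefore proves item (3) only on the sub-range $i+5\le j\le 9$; the rest of the printed range is not provable (the lemma as printed appears to overstate the zeros), and a finished write-up must either restrict the range accordingly or flag the discrepancy rather than claim the whole of item (3) follows ``by pure arithmetic on the set.''
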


\begin{proof}
This lemma can be easily proved due to Eq.~(\ref{eq:difference}). It can be seen that that the bit-flipped fault $\delta^{t}_{\ell}$ will affect to three input differences in the next round, that is, $\delta^{t + 1}_{\ell}$, $\delta^{t + 1}_{\ell + 1}$ and $\delta^{t + 1}_{\ell + 5}$. Bits at the position $j$, where $j \in [0, \ell)$ and $j \in (\ell + 5, n)$ will not be affected, {\em i.e.}, $\delta^{t + 1}_{j} = 0$. Furthermore, $\delta^{t + 1}_{\ell + 1} = \delta^t_{\ell} = 1$ and $\delta^{t + 1}_{\ell + j} = 0$, for $2 \le j \le 4$ due to Lemma~\ref{lem:difference}.
Likewise, the rightmost bit position $\ell + 5$ at the round $t + 1$ will propagate faults into three bits in the next round $t + 2$ ({\em i.e.}, $\delta^{t + 2}_{\ell + 5}$, $\delta^{t + 2}_{\ell + 6}$ and $\delta^{t + 2}_{\ell + 10}$) and $\delta^{t + 2}_{\ell + 2} = \delta^{t + 1}_{\ell + 1} = 1$.
The process will continue in the next round and so on.
%Furthermore, the total number of consecutive zeros will be $n - 5i - 1$ at the round $t + i$. As shown in Table~\ref{tab:zeros}, there are $8$ (resp., $16$) consecutive zeros in $\Delta^{34}$ (resp., $\Delta^{42}$) when one bit is flipped at the round $31$ (resp., $39$ of SIMECK48/96 (resp., SIMECK64/128).

\qed
\end{proof}

\begin{observation}\label{obs:deducefromzeros}
%If $\delta^{t - 1}_j$ is known according to Lemma~\ref{lem:numberzero} ({\em i.e.}, $\delta^{t - 1}_j = 0$), then
From Lemma~\ref{lem:difference}, we have:
\begin{align*}
\mbox{If } &\delta^t_j = 1 \,\, \& \,\, \delta^t_{j - 5} = 0,  \,\, \mbox{then }\, x^t_{j - 5} = \delta^{t + 1}_j + \delta^t_{j - 1} + \delta^{t - 1}_j\\
\mbox{If } &\delta^t_j = 0 \,\,\&  \,\, \delta^t_{j - 5} = 1,  \,\,  \mbox{then } x^t_{j} =  \delta^{t + 1}_j + \delta^t_{j - 1} + \delta^{t - 1}_j\\
\end{align*}
\end{observation}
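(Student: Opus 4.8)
The plan is to read off both identities directly from the difference recurrence established in Lemma~\ref{lem:difference}, i.e.\ Eq.~(\ref{eq:difference}),
\[
\delta^{t+1}_j = \delta^t_{j} x^t_{j-5} + \delta^t_{j-5} x^t_{j} + \delta^t_{j} \delta^t_{j-5} + \delta^t_{j-1} + \delta^{t-1}_j,
\]
by specializing the two difference bits $\delta^t_j$ and $\delta^t_{j-5}$ to the stated constants and then solving for whichever intermediate-input bit still appears. Throughout, all arithmetic is over $\mathbb{F}_2$, so a term may be moved across the equality sign simply by adding it to both sides.

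First I would take $\delta^t_j = 1$ and $\delta^t_{j-5} = 0$. These substitutions annihilate the terms $\delta^t_{j-5} x^t_j$ and $\delta^t_j \delta^t_{j-5}$ and reduce $\delta^t_j x^t_{j-5}$ to $x^t_{j-5}$, so Eq.~(\ref{eq:difference}) becomes $\delta^{t+1}_j = x^t_{j-5} + \delta^t_{j-1} + \delta^{t-1}_j$; adding $\delta^t_{j-1}+\delta^{t-1}_j$ to both sides yields $x^t_{j-5} = \delta^{t+1}_j + \delta^t_{j-1} + \delta^{t-1}_j$, the first claimed identity. The complementary case $\delta^t_j = 0$, $\delta^t_{j-5} = 1$ is symmetric: now $\delta^t_j x^t_{j-5}$ and $\delta^t_j \delta^t_{j-5}$ drop out, $\delta^t_{j-5} x^t_j$ collapses to $x^t_j$, and the same rearrangement gives $x^t_j = \delta^{t+1}_j + \delta^t_{j-1} + \delta^{t-1}_j$.

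There is no real obstacle in the derivation itself — it is a two-line substitution into Eq.~(\ref{eq:difference}). The only point worth stressing, and presumably the reason the statement is singled out, is that in both cases the right-hand side involves only difference bits $\delta^{t-1}_j$, $\delta^{t}_{j-1}$, $\delta^{t+1}_j$, which an attacker can extract from the differential trail (starting from the known differences at the ciphertext rounds and propagating backwards via Remark~\ref{rem:deducedeltat2}); hence whenever exactly one of $\delta^t_j$, $\delta^t_{j-5}$ equals $1$, the corresponding bit of the intermediate state $X^t$ is fully recoverable. I would include that remark explicitly so that the observation connects to the attack rather than standing as a bare algebraic rewrite.
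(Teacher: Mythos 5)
Your proof is correct and matches the paper's treatment: the observation is stated there as an immediate consequence of Lemma~\ref{lem:difference}, obtained exactly by substituting the two constant cases of $(\delta^t_j,\delta^t_{j-5})$ into Eq.~(\ref{eq:difference}) and rearranging over $\mathbb{F}_2$. Your added remark about the right-hand side consisting only of attacker-known difference bits is consistent with how the paper uses the observation in Step~4 of the attack.
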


\begin{table}[htbp]
\centering
\caption{Recover $x^t_{j - 5}$ and $x^t_{j}$ from Observation~\ref{obs:deducefromzeros}}\label{tab:recover_twobits}
%\begin{tabular}{ccccc}
\begin{tabular}{p{0.03\textwidth}p{0.03\textwidth}p{0.1\textwidth}p{0.1\textwidth}p{0.1\textwidth}}
\toprule
$\delta^t_j$ & $\delta^t_{j - 5}$ & $\delta^{t - 1}_j$ & $x^t_{j - 5}$ & $x^t_{j}$ \\
\midrule \midrule
\rowcolor[gray]{.9}0 & 0 & known/unknown & unknown & unknown\\
0 & 1 & known & unknown & $\delta^{t + 1}_j + \delta^t_{j - 1} + \delta^{t - 1}_j$\\
\rowcolor[gray]{.9}0 & 1 & unknown & unknown & unknown\\
1 & 0 & known & $\delta^{t + 1}_j + \delta^t_{j - 1} + \delta^{t - 1}_j$ & unknown \\
\rowcolor[gray]{.9}1 & 0 & unknown & unknown & unknown\\
1 & 1 & known/unknown & unknown & unknown\\
\bottomrule
\end{tabular}
\end{table}

\noindent Table~\ref{tab:recover_twobits} shows the possibility to recover two bits $x^t_{j - 5}$ and $x^t_{j}$ based on the relation of $\delta^t_j$ and $\delta^t_{j - 5}$. %Observation~\ref{obs:deducefromzeros}
%It indicates that we can recover some bits of $X^{T - 2}$ as we know $\Delta^{T - 1}$, $\Delta^{T - 2}$ and some bits of $\Delta^{T - 3}$ due to Lemma~\ref{lem:numberzero}.

\section{Differential Fault Analysis on SIMECK ciphers}\label{sec:bitflip}
This section will describe our DFA attack against SIMECK lightweight block ciphers in the bit-flip model. Specifically, there will be one bit flipped when a fault is injected. Given a plaintext $P$, the SIMECK encryption function outputs the corresponding ciphertext $C$. %Let $(X^i, Y^i)$ denote the intermediate inputs of $P$ at round $i$, for $0 \le i \le T - 1$.
Assume that a fault is injected into the input $X^t$ of an intermediate round $t$ and causes a bit-flip at the position $\ell$.

Let $\hat{X}^{t}$ be the fault value, so $X^t$ and $\hat{X}^{t}$ will be different at the $\ell^{th}$ bit and identical everywhere else. In other words, if $(x^t_{0}, \ldots, x^t_{n - 1})$ and $(\hat{x}^t_{0}, \ldots, \hat{x}^t_{n - 1})$ are $n$-bits of $X^t$ and $\hat{X}^t$, respectively, then $x^t_j = \hat{x}^{t}_j + 1$ for $j = \ell$ and $x^t_j = \hat{x}^{t}_j$ for $j \ne \ell$.

\subsection{Attack Description}\label{sec:attackdescription}

We aim at recovering the full master key $K$ (equivalent to $4$ round keys) by injecting faults into a single round only. Faults will be injected at the round $T - 5$ and we try to obtain 4 round keys $K^{T - 1}, K^{T - 2}, K^{T - 3}$ and $K^{T - 4}$. For instance, in order to retrieve $4$ round keys of SIMECK32/64 ($K^{28}$, $K^{29}$, $K^{30}$, and $K^{31}$), we will inject faults into the round $27$.
Our analytic attack based on the differential trail works as follows ({\sc Algorithm 1}):\\

%\vspace{1cm}

\par\noindent\rule{0.49\textwidth}{1.6pt}
\textbf{\sc Algorithm 1:} DFA attack on SIMECK ciphers
\par\noindent\rule{0.49\textwidth}{0.6pt}

\underline{\sc Step 1:} Choose a random plaintext $P$ to feed into the SIMECK encryption function, and get a ciphertext $C$ as return.

 \underline{\sc Step 2:} Re-run encryption with the above input $P$, and then inject a fault into the left input at the round $T - 5$.
Without loss of generality, we suppose that this fault flips the least significant bit $x^{T - 5}_{0}$.

\underline{\sc Step 3:} Find input differences of the subsequent rounds $T - 5 < t < T$, {\em i.e.}, $\Delta^t = X^t + \hat{X}^{ t}$. These input differences can be determined due to Lemma~\ref{lem:difference} and could be represented by $0, 1$ or algebraic expressions of other input variables. (as shown in Table~\ref{tab:trail_6rounds}).
The differences for SIMECK32/64 when the bit $x^{27}_0$ flipped are listed in Table~\ref{tab:trail_6rounds}. Such differences for SIMECK48/96 and SIMECK64/128 are listed in Fig~\ref{sec:appendix}. It can be seen that each round contains $2$ linear expressions to express the differences between correct and faulty intermediate ciphertexts.

\underline{\sc Step 4:} Deduce bits of $X^{T - 2}$,
\begin{itemize}
\item \underline{From linear expressions}:
As $\Delta^{T}, \Delta^{T - 1}$, and $\Delta^{T - 2}$ are known, the attacker can deduce two bits of $X^{T - 2}$ by summing up linear equations in two round $T - 2$ and $T - 1$. For example, if a fault was injected into the position $0$ at the round $27$ of SIMECK32/64, as shown in Table~\ref{tab:trail_6rounds}, the attacker can obtain the values of $x^{30}_{8}$ and $x^{30}_{14}$ by summing up $\delta^{31}_{8} + \delta^{30}_{7}$, and $\delta^{31}_{3} + \delta^{30}_{2}$, respectively. Let $\ell$ be the fault position at the round $T - 5$. As the fault will propagate to the position $\ell + 3$ at the round $T - 2$, the attacker will be able to deduce two bits $x^{T - 2}_{\ell - 2}$ and $x^{T - 2}_{\ell + 8}$.

\item \underline{From Observation~\ref{obs:deducefromzeros}}: The attacker knows $\Delta^{T - 1}$, $\Delta^{T - 2}$ and some bits of $\Delta^{T - 3}$ according to Lemma~\ref{lem:numberzero}. Thus, if $\delta^{T - 2}_j = 1$ and $\delta^{T - 3}_j = 0$ (resp., $\delta^{T - 3}_{j + 5} = 0$), then thanks to Observation~\ref{obs:deducefromzeros} the attacker can deduce two bits $x^{T - 2}_{j - 5}$ (and  $x^{T - 2}_{j + 5}$, resp.).
\end{itemize}

 \underline{\sc Step 5:} After obtaining bits of $X^{T - 2}$, the attacker can retrieve the corresponding bits of the last round key $K^{T - 1}$ due to Eq.~(\ref{eq:lastroundkey}). He/she repeats steps 2--4 with different flipped bit position to recover the full round key $K^{T - 1}$.

 \underline{\sc Step 6:} Decrypt the last round using $K^{T - 1}$ to get $(X^{T - 1}, Y^{T - 1}) (=(X^{T - 1}, X^{T - 2})) $ and $(\hat{X}^{ T - 1}, \hat{Y}^{ T - 1}) = ((\hat{X}^{ T - 1}, \hat{X}^{ T - 2}))$. Again, as the values $\Delta^{T - 1}$, $\Delta^{T - 2}$, and $X^{T - 2}$ are known, the attacker could compute $\Delta^{T - 3}$ due to Eq.~(\ref{eq:difference}). He/she then repeats steps 4 and 5 to obtain the round key $K^{T - 2}$. The attacker continues this process to retrieve two more round keys $K^{T - 3}$ and $K^{T - 4}$.

 \underline{\sc Step 7:} With $4$ round keys, the attacker could deduce the master key $K$ of a SIMECK block cipher by the key schedule.
\par\noindent\rule{0.49\textwidth}{1.6pt}

From {\sc Step 4}, it can be seen that for each bit $1$ of $\Delta^{T - 2}$, the attacker may recover $2$ bits of $X^{T - 2}$ (corresponding to two bits of $K^{T - 1}$). Likewise, once $K^{T - 1}$ is recovered, that attacker knows $\Delta^{T - 2}$, $\Delta^{T - 3}$ and some bits of $\Delta^{T - 4}$, he then may be able to recover two bits of $X^{T - 3}$ (corresponding to two bits of $K^{T - 2}$) with each bit $1$ of $\Delta^{T - 3}$ and so on.

\begin{table*}[t]
%\scriptsize
\small
\centering
\caption{Comparisons of the Fault analysis attacks on SIMECK ciphers}%. *: the number of faults reported in~\cite{NSS16} is required to recover only one round key.}
\label{tab:comparison}
\begin{tabular}{l|cc|ccc}
\toprule

\multirow{3}{*}{SIMECK$2n/4n$} & \multicolumn{2}{c|}{Nalla \emph{et al.}~\cite{NSS16}} & \multicolumn{3}{c}{Our work} \\\cline{2-6}
& Locations of faults to & No of faults * & Locations of faults to & No of faults & No of faults   \\%\cline{2-6}
& recover master key & (Last round key) & recover master key & (Last round key) & (Master key) \\
\midrule \midrule
\rowcolor[gray]{0.9} SIMECK32/64 & $30, 29, 28, 27$ & $28.32$ & $27$ & $12.12$ & $25.78$ \\

SIMECK48/96 & $34, 33, 32, 31$ & $41.44$ & $31$ & $22.88$ & $43.56$  \\

\rowcolor[gray]{.9} SIMECK64/128 & $42, 41, 40, 39$ & $57.06$ & $39$  & $30.14$ & $89.51$ \\
\bottomrule

\end{tabular}

\end{table*}

%%%%%%%%%%%%%%%%%%%%%%%%%%%%%%%%%%%
\subsection{Fault Position}\label{sec:faultposition}
In this section, we will discuss  how the attacker determines the position of a fault.

\subsubsection{Determining fault position in SIMECK64/128.} A fault will be injected into the round $39$ at the position $\ell$, that is, $\delta^{39}_{\ell} = 1$. An attacker will use on the following facts to deduce the value of $\ell$:
\begin{itemize}
\item From Lemma~\ref{lem:numberzero}, the input differences at the round $42$ will have $16$ consecutive zeros (see Table~\ref{tab:zeros}).
\item The fault position $\ell$  will be propagated and shifted right three positions at the round $42$, that is $\delta^{42}_{\ell + 3} = 1$.
\item There is a pattern $10***00**000* \underbrace{00\ldots0}_{16 \, digits}$, where the value of $*$ could be 1 or 0. The position of the first $1$ in this pattern will be $\ell + 3$. This pattern consists of $29$ digits out of $32$ digits of $\Delta^{42}$. %could be unique with a high probability.
\end{itemize}

\subsubsection{Determining fault position in SIMECK48/96.} A fault will be injected into the round $31$ at the position $\ell$, that is, $\delta^{31}_{\ell} = 1$. Likewise, an attacker will use on the following facts to deduce the value of $\ell$:
\begin{itemize}
\item From Lemma~\ref{lem:numberzero}, the input differences at the round $34$ will have $8$ consecutive zeros (see Appendix~\ref{sec:appendix}).
\item The fault position $\ell$  will be propagated and shifted right three positions at the round $34$, that is $\delta^{34}_{\ell + 3} = 1$.
\item There is a pattern $10***00**000* \underbrace{00\ldots0}_{8 \, digits}$, where the value of $*$ could be 1 or 0. The position of the first $1$ in this pattern will be $\ell + 3$. This pattern consists of $21$ digits out of $24$ digits of $\Delta^{34}$.
\end{itemize}

\subsubsection{Determining fault position in SIMECK32/64.} As there is no consecutive bit zero determined by Lemma~\ref{lem:numberzero}, determining the fault position for SIMEC 32/64 is challenging. However, we still have this pattern $10***00**000$ in $16$ bits of $\Delta^{30}$, where the value of $*$ could be 1 or 0. The first $1$ in this pattern will be $\ell + 3$, where $j$ is the fault position injected at the round $27$. %This pattern could help to deduce the position $j$ of the fault with a high confidence.

Based on the above facts and given that $\Delta^{T - 2}$ is known, the attacker also can deduce the fault position $\ell$ at the round $T - 5$ with high confidence.

%%%%%%%%%%%%%%%%%%%%%%%%%%%%%%%%%%%%%%%%%%%%
\section{Experiments}\label{sec:simulation}

\subsection{Setup}
To verify our proposed attack, we implemented a software simulation against all three SIMECK family members in C programming language\footnote{The source code could be found in the link \url{https://github.com/dple/DFA_Simeck}}. 
In this simulation, we suppose that the attacker cannot control the position of the faults. He thus may inject faults into the same position many times until fully recovering one round key or the master key. Once a {\em random} fault is injected into the left input of the round $i$ at the bit $j$, only the differential input $\delta_j^i$ will be equal to one, differential inputs at other positions will be zero. The positions of the faults were generated {\em randomly} and {\em independently}.

For each cipher member, we repeat the experiment 10,000 times and report the average number of faulty ciphers required to recover the last round key $K^{T - 1}$ and the whole master key (corresponding to the last 4 round keys $K^{T - 1}$, $K^{T - 2}$, $K^{T - 3}$, and $K^{T - 4}$).  The only purpose of recovering the last key is to compare our attacks to the ones in~\cite{NSS16}.

\subsection{Simulation Results}

\begin{figure*}
\centering
\caption{Histogram of the number of faults required to recover keys. The number of samples is 10,000. Frequency density is represented on the vertical axis and the number of fault injections is represented on the horizontal axis.}\label{fig:histogram}
\includegraphics[width=0.75\textwidth]{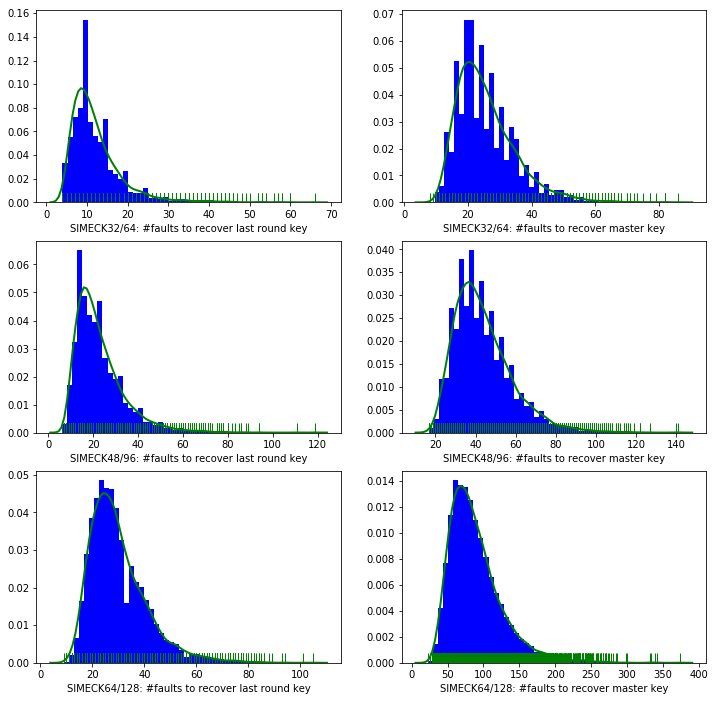}
\end{figure*}

Fig.~\ref{fig:histogram} shows a histogram of the number of fault injections to obtain the last round key and the full master key of all three variants of SIMECK block cipher. 
%Frequency is represented on the vertical axis and the number of fault injections is represented on the horizontal axis. 
The horizontal axis represents the number of faults, and the vertical axis represents the frequency experiments requiring that number of faults. 
The total number of experiments is 10,000, as mentioned in the previous section. As seen in Fig.~\ref{fig:histogram}, all histograms are right-skewed. In order to recover the last round key in SIMECK32/64, it requires from a few fault injections to about 70 fault injections. However, most of the experiments only required around 9-13 fault injections. Likewise, the maximum number of fault injections required to recover the full master key is about 90; however, most of the experiments only required around 19-25 fault injections.

Table~\ref{tab:comparison} compares the round locations, in which faults are induced, the number of faults required between our attacks, and Nalla \emph{et al.}'s attack under the one bit-flip model. 
From Table~\ref{tab:comparison}, it can be seen that our fault analysis attack requires much fewer faults than Nalla \emph{et al.}'s attack to recover the last round key for all three members of SIMECK. Even more, our attack to recover the whole master key of SIMECK32/64 also requires fewer faults than theirs for only the last round key. Last but not least, in order to recover the whole master key,  while our attack injects the faults into a single round only, Nalla \emph{et al.}'s attack has to inject faults into $4$ different rounds of a SIMECK cipher. As a result, our attack is more {\em practical}.

\subsection{Discussions}
The proposed attack is based on an analytic method, which analyzes the differential trail between the correct and faulty ciphertexts to deduce the intermediate inputs, and then round keys. The key point in this method is to find the differential trail that is not too complicated to analyze. Compared to theoretical attacks~\cite{bagheri2015linear, qiao2016differential, zhang2016integral, zhang2017security}, this method is more straightforward, but we demonstrated that the attack is {\em effective}. In this paper, we use only linear expressions to analyze, however, non-linear expressions (e.g., quadratic expressions denoted as $*$ in Table~2 and Fig.~2) may be useful for more-in-depth analyses. 
Block ciphers with more complicated round functions would be resistant to the proposed attack, e.g., using a non-linear S-box. However, this will trade-off with the performance of the implementation.

%%%%%%%%%%%%%%%%%%%%%%%%%%%%%%%%%%%%%%%%%%%%%%%%%%%%%%%%%%%%%%

\section{Conclusion}\label{sec:conclusion}
In this paper, we have proposed  an improved fault attack on the SIMECK lightweight block ciphers under the one bit-flip model. In this model, we assume that a single bit will be flipped to its complementary value once a fault was successfully injected. We also assume that the attacker has no control over the location of the flipped bit. 
The advantage of our attack is that not only it requires less number of faults, but also faults need to be injected into only a {\em single round} of the ciphers in order to recover the whole master key. 
As a result, it makes our attack  more {\em practical}. We demonstrated the effectiveness of our attack by simulating in C for all three members of SIMECK ciphers. % 
Our experimental results over $10,000$ times showed that the attack requires $25.78$, $43.56$, and $89.51$ faults on average to recover the full master key of the SIMECK32/64, SIMECK48/96, and SIMECK64/128, respectively.

\bibliographystyle{IEEEtran}
\bibliography{IEEEabrv,fa}

\begin{IEEEbiography}
[{\includegraphics[width=1in,height=1.25in,clip,keepaspectratio]{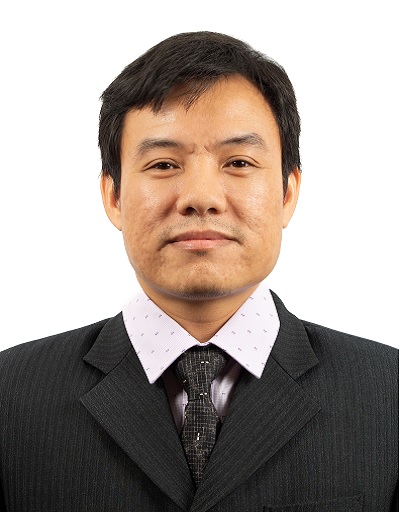}}]{Duc-Phong Le} received the Ph.D. degree in computer science from the University of Pau et des Pays de l’Adour, in August 2009. He was a Postdoctoral Fellow with the Algorithms Research Group, University of Caen, Base Normandie, from 2009 to 2010, a Research Scientist with the National University of Singapore (NUS), Singapore, from October 2010 to May 2016, a Senior Security Analyst with Underwriters Laboratories, from May 2016 to June 2017, and the Scientist II of the Institute for Infocomm Research (I2R), Agency of Science, Technology and Research (A*STAR), Singapore, from July 2017 to March 2019. He is currently working as the Research Team Lead of the Canadian Institute for Cybersecurity (CIC), University of New Brunswick, Canada. His research interests include applied cryptography, elliptic curve cryptography, secure and efficient implementations, applied machine learning to cybersecurity issues, and blockchain.
\end{IEEEbiography}

\begin{IEEEbiography}[{\includegraphics[width=1in,height=1.25in,clip,keepaspectratio]{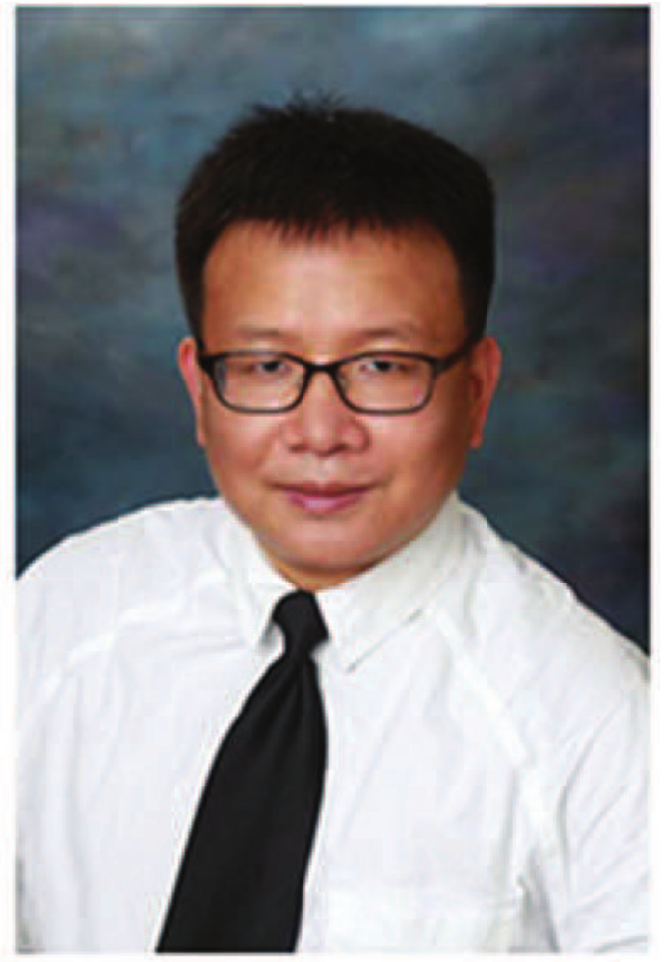}}]{Rongxing Lu}
 (S'09-M'11-SM'15) is currently an associate professor at the Faculty of Computer Science (FCS), University of New Brunswick (UNB), Canada. Before that, he worked as an assistant professor at the School of Electrical and Electronic Engineering, Nanyang Technological University (NTU), Singapore from April 2013 to August 2016. Rongxing Lu worked as a Postdoctoral Fellow at the University of Waterloo from May 2012 to April 2013. He was awarded the most prestigious ``Governor General's Gold Medal", when he received his PhD degree from the Department of Electrical \& Computer Engineering, University of Waterloo, Canada, in 2012; and won the 8th IEEE Communications Society (ComSoc) Asia Pacific (AP) Outstanding Young Researcher Award, in 2013. He is presently a senior member of IEEE Communications Society. His research interests include applied cryptography, privacy enhancing technologies, and IoT-Big Data security and privacy. He has published extensively in his areas of expertise, and was the recipient of 8 best (student) paper awards from some reputable journals and conferences. Currently, Dr. Lu currently serves as the Vice-Chair (Conferences) of IEEE ComSoc CIS-TC (Communications and Information Security Technical Committee). Dr. Lu is the Winner of 2016-17 Excellence in Teaching Award, FCS, UNB.
\end{IEEEbiography}

\begin{IEEEbiography}[{\includegraphics[width=1in,height=1.25in,clip,keepaspectratio]{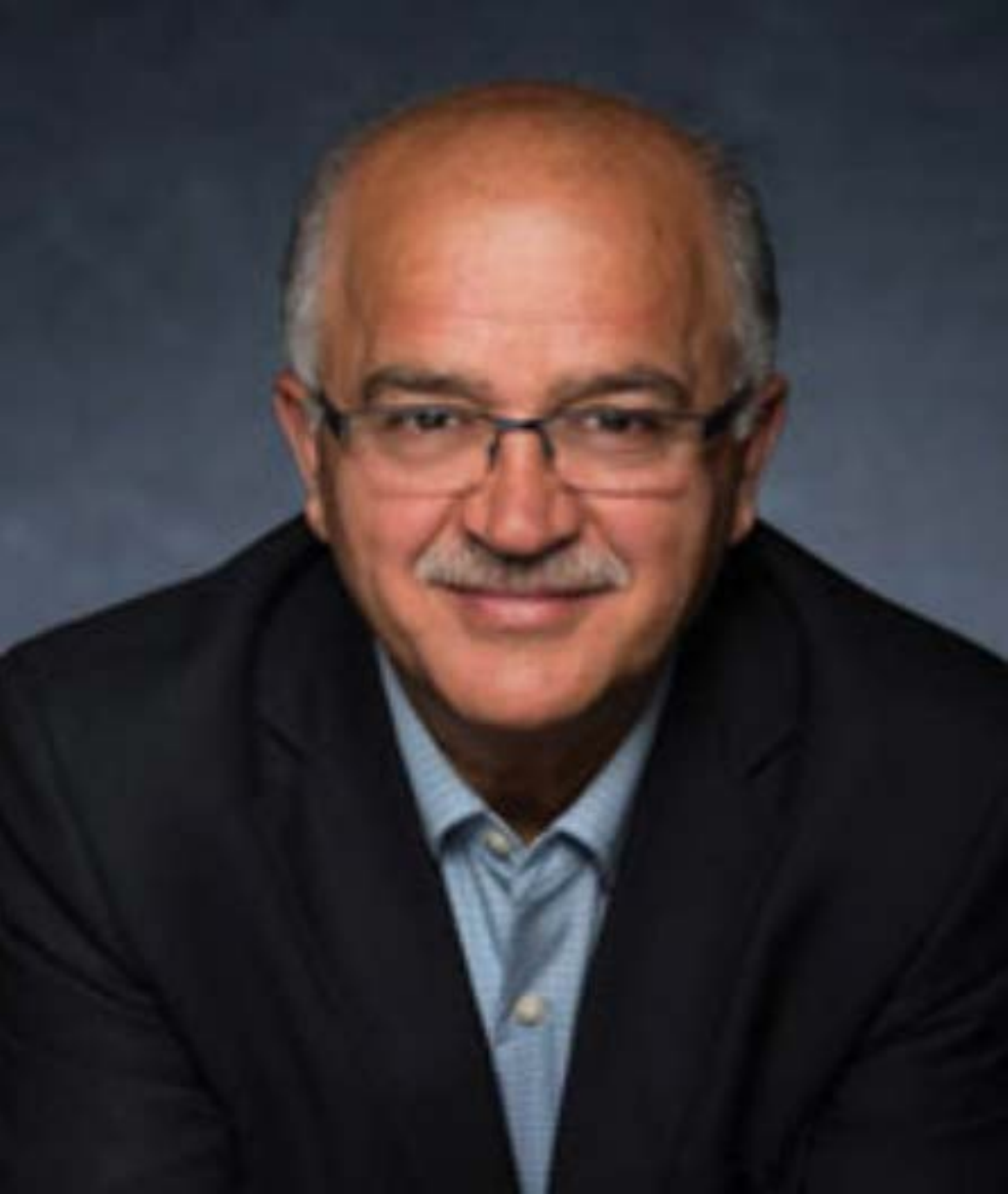}}]{Ali A. Ghorbani} (Senior Member, IEEE) has held a variety of academic positions for the past 39 years. He is currently a Professor of computer science, the Tier 1 Canada Research Chair in cybersecurity, and the Director of the Canadian Institute for Cybersecurity, which he established, in 2016. He has served as the Dean of the Faculty of Computer Science, University of New Brunswick, from 2008 to 2017. He is also the Founding Director of the Laboratory for Intelligence and Adaptive Systems Research. He has spent over 29 years of his 39-year academic career, carrying out fundamental and applied research in machine learning, cybersecurity, and critical infrastructure protection. He is the Co-Inventor on three awarded and one filed patent in the fields of cybersecurity and web intelligence. He has published over 280 peer-reviewed articles during his career. He has supervised over 190 research associates, postdoctoral fellows, and students during his career. His book Intrusion Detection and Prevention Systems: Concepts and Techniques (Springer, October 2010). He developed several technologies adopted by high-tech companies and co-founded three startups, Sentrant Security, EyesOver Technologies, and Cydarien Security, in 2013, 2015, and 2019, respectively. He was a recipient of the 2017 Startup Canada Senior Entrepreneur Award, and the Canadian Immigrant Magazine’s RBC top 25 Canadian immigrants of 2019. He is the Co-Founder of the Privacy, Security, Trust (PST) Network in Canada and its annual international conference. He has served as the Co-Editor-in-Chief for the International Journal of Computational Intelligence, from 2007 to 2017.
\end{IEEEbiography}

%\end{document}
\end{document}